\pgfplotsset{compat=1.18}
\DeclareMathOperator{\DSBS}{DSBS}
\DeclareMathOperator{\BSC}{BSC}
\DeclareMathOperator{\var}{var}
\DeclareMathOperator{\rank}{rank}
\DeclareMathOperator{\Bern}{Bernoulli}
\newcommand{\calC}{\mathcal{C}}
\newcommand{\calR}{\mathcal{R}}
\renewcommand{\o}{\mathsf{o}}
\newcommand{\probP}{\mathsf{P}}
\newcommand{\bbE}{\mathbb{E}}
\renewcommand{\d}{\mathrm{d}}
\newcommand{\binent}{h_2}
\newcommand{\bindiv}{d_2}
\newcommand*\markov{\mathrel{-\mkern-3mu{\circ}\mkern-3mu-}}
\newcommand{\ones}{\mathbf{1}}
\newcommand{\zeros}{\mathbf{0}}
\newcommand{\eye}{\mathbf{I}}
\newcommand{\hyp}{\mathcal{H}}
\newcommand{\party}[1]{\mathsf{#1}}
\newcommand{\better}{\succeq_{\text{B}}}
\newcommand{\truncE}{E_{\text{tr}}}
\newcommand{\HanE}{E_{\text{Han}}}
\newcommand{\comE}{E_{\text{com}}}
\newcommand{\Wmat}{W}
\newcommand{\bvec}{\vec{b}}
\newcommand{\avec}{\vec{a}}
\newtheorem{theorem}{Theorem}
\newtheorem{conjecture}{Conjecture}
\newtheorem{fact}{Fact}
\title{On the Suboptimality of Linear Codes for Binary Distributed Hypothesis Testing} 
\author{%
	Adway Girish, Robinson D.~H.~Cung, Emre Telatar \\
	{\small    School of Computer and Communication Sciences, EPFL} \\
	{\small   \texttt{\{adway.girish,robinson.cung,emre.telatar\}@epfl.ch} }
}
\begin{document}

	\maketitle
	
	\begin{abstract}
		We study a binary distributed hypothesis testing problem where two agents observe correlated binary vectors and communicate compressed information at the same rate to a central decision maker. In particular, we study linear compression schemes and show that simple truncation is the best linear scheme in two cases: (1) testing opposite signs of the same magnitude of correlation, and (2) testing for or against independence. We conjecture, supported by numerical evidence, that truncation is the best linear code for testing \emph{any} correlations of opposite signs.  Further, for testing against independence, we also compute classical random coding exponents and show that truncation, and consequently any linear code, is strictly suboptimal.
	\end{abstract}

	\section{Introduction}
	\label{sec: intro}
	
	Consider the following distributed hypothesis testing (DHT) setup with two sensors and a central decision maker.  Under hypothesis $\hyp = i \in \{0,1\}$, the pair of random variables $(X,Y)$ has joint distribution $\probP^i_{XY}$.  The two sensors observe $n$ independent copies of $X$ and $Y$ respectively and send a compressed version of these observations to the central decision maker, who then declares $\hat \hyp = 0$ or $1$ according to some decision rule, as shown in Fig.~\ref{fig: dht_fig}.  More precisely, under hypothesis $\hyp = i$, with $(X_\ell, Y_\ell)$, $\ell = 1,\dots,n$ drawn i.i.d.\ from distribution $\probP^i_{XY}$, agent $\party{A}$ observes $X^n$ and agent $\party{B}$ observes $Y^n$.  These agents compress $X^n$ and $Y^n$ using encoding functions $g$ and $h$ respectively, and send $g(X^n)$ and $h(Y^n)$ to the central decision maker $\party{C}$, who then declares $\hat \hyp$.
	
	This is an instance of the more general problem of statistical inference under communication constraints, first studied by Berger~\cite{berger1979decentralized}, see the classical survey by Han and Amari~\cite{han_amari_survey} for an extensive overview of the early results and techniques.  The DHT problem has received renewed interest in recent years owing to such distributed setups arising, for example, in modern remote sensor networks, where raw data cannot be centrally aggregated due to communication, latency, or privacy limitations~\cite{sensor_networks,sensor_new}.  Several variants of the above problem have been studied, such as allowing multiple rounds of interaction~\cite{DHT_interactive}, making decisions at multiple centers~\cite{DHT_cooperation}, adding an explicit privacy metric~\cite{DHT_privacy}, over noisy channels~\cite{DHT_noisy}, and so on.  For the setup in Fig.~\ref{fig: dht_fig}, most choices of $g$ and $h$ proposed in the literature are based on typicality-based quantization and binning arguments, as in Ahlswede--Csisz\'ar~\cite{ahlswede_csiszar}, Han~\cite{han_multiterminal}, Shimokawa--Han--Amari~\cite{SHA}, Kochman--Wang~\cite{kochman_wang_improved}, and Watanabe~\cite{watanabe_binning}, but the optimal choice remains unknown. 
	
	\begin{figure}[!tbhp]
		\centering
		\tikzset{
  block/.style    = {draw, thick, rectangle, minimum height = 1cm,minimum width = 1.5em}
}

\begin{tikzpicture}[auto, node distance=2cm,>=latex', example/.style={rectangle callout, draw=black, block/.style  = {draw, thick, rectangle, minimum width = 1em}}]
    \node [block, inner sep=1.3em] at (4,0) (dec) {$\party{C}$};
    \node [coordinate, right of=dec, node distance=1.8cm] (output) {};
    \draw [thick,->] (dec) -- node[anchor=west] {\hspace{1.6em}$\hat \hyp$} (output);

    \node at (-1.5,1.3) (inputx) {$X^n$};
    \node at (-1.5,-1.3) (inputy) {$Y^n$};
    \node [block, inner sep=1em] at (0.5,1.3) (encx) {$\party{A}$};    
    \node [block, inner sep=1em] at (0.5,-1.3) (ency) {$\party{B}$};  
    \draw [thick,->] (inputx) --  (encx);
    \draw [thick,->] (inputy) --  (ency);
    \draw [thick,densely dashed,->] (encx) -- node[pos=0.2, anchor=south west] {$g(X^n)$} (dec);
    \draw [thick,densely dashed,->] (ency) -- node[pos=0.2, anchor=north west] {$h(Y^n)$} (dec);
\end{tikzpicture}
		\caption{Distributed hypothesis testing setup considered.  The communication from $\party{A}$ and $\party{B}$ to $\party{C}$ is constrained. We take $(X^n, Y^n) \sim \DSBS(p_i)^{\otimes n}$ under hypothesis $\hyp = i \in \{0,1\}$ and study the performance of linear $g,h$.}
		\label{fig: dht_fig}
	\end{figure}
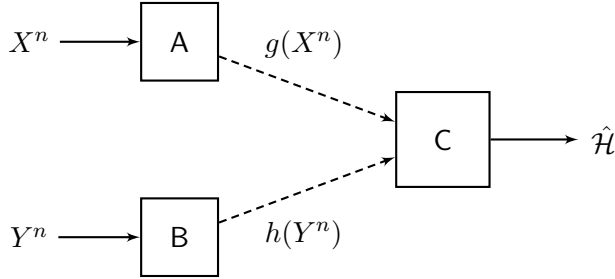
	
	In this paper, we take $X$ and $Y$ to be binary random variables uniformly distributed on $\{0,1\}$, with the hypothesis determining the correlation between $X$ and $Y$.  Explicitly, we have $Y = X \oplus Z$, where $Z \sim \Bern(p_i)$ is independent of $X$  (and also $Y$) under hypothesis $\hyp = i$.  This is exactly the setup considered by Haim and Kochman~\cite{haim_kochman_itw,haim_kochman_arxiv}, except that they restrict themselves to the case with $p_0 \leq p_1 \leq \frac12$ (i.e., $X$ and $Y$ are positively correlated under both hypotheses).  They show, inspired by the K\"orner--Marton scheme~\cite{korner_marton} for distributed computation, that linear codes can be used to obtain performance comparable to the unconstrained/co-located problem (where there is no compression), even obtaining a better Stein exponent than with classical typicality-based random quantization schemes~\cite{ahlswede_csiszar,han_multiterminal}. 
Linear codes are also attractive for being computationally tractable, allowing for practically realizable compression schemes~\cite{practical_DHT}.  
We focus on the case when $p_0$ and $p_1$ are on opposite sides of $\frac12$ (i.e., $X$ and $Y$ have correlations of opposite signs under the two hypotheses), and
claim that linear codes achieve only a trivial performance, i.e., no better than straightforward truncation.  This is similar in spirit to Amari~\cite{amari_fisher}, who showed, by computing the Fisher information of $(g(X^n), h(Y^n))$, that truncation is asymptotically optimal for estimating the correlation when $X,Y$ are nearly independent, among a different class of ``linear-threshold encodings''.
In particular, we show the following:
	\begin{itemize}
		\item[1.] For any values of correlations under the two hypotheses, there is no advantage in allowing the agents to use different linear codes, or equivalently, the best error trade-off is obtained by choosing the same linear code at both agents.
		\item[2.] For (i) testing opposite signs of the same magnitude of correlation, and for (ii) testing for or against independence, straightforward truncation is the best linear code. We conjecture that this holds when the correlations have opposite signs under the two hypotheses.
		\item[3.] For (i) and (ii) above, classical typicality-based quantization achieves a better Stein exponent than truncation, and hence any linear code, making them strictly suboptimal. 
	\end{itemize}

	\section{Preliminaries and Problem Setup} \label{sec: prelims}
	
	To fix ideas, we first describe the general hypothesis testing framework in a one-shot fashion, i.e., with only one sample.  Naturally, this subsumes our setting with $n$ i.i.d.\ samples as the single general sample is allowed to be a vector drawn from a product distribution.
	
	\subsection{General one-shot setup}
	
	Under hypothesis $\hyp = i$, we have $(X,Y) \sim \probP^i_{XY}$.  We may drop the subscripts when the random variables are clear from context.
	
	We first consider a co-located version of the problem:
	suppose the decision maker $\party{C}$ observes both $X$ and $Y$.  The goal of $\party{C}$ is to obtain an estimate $\hat \hyp$ of $\hyp$ from $(X,Y)$, i.e., a \emph{decision rule} $\hat \hyp (X,Y)$. Associated with each decision rule $\hat \hyp$ are two kinds of errors, called the type-I and type-II errors, with probabilities respectively given by
	\begin{align*}
		e_1 &= \Pr\{\hat \hyp = 1 \mid \hyp = 0\} = \probP^0 (\hat \hyp (X,Y) = 1), \\ e_2 &= \Pr\{\hat \hyp = 0 \mid \hyp = 1\} = \probP^1 (\hat \hyp (X,Y) = 0).
	\end{align*}
	Let $\calR(\probP^0_{XY}, \probP^1_{XY}) \subset [0,1]^2$ denote the set of all pairs $(e_1,e_2)$ achieved by some decision rule $\hat \hyp$. An \emph{optimal} decision rule is one that lies on the lower-left boundary of this region.  As is well-known, these optimal decision rules are the randomized Neyman--Pearson likelihood ratio tests---pick a threshold $t \geq 0$, and decide $\hat \hyp = 0$ if $\probP^0(X,Y) / \probP^1(X,Y) > t$, $1$ if $\probP^0(X,Y) / \probP^1(X,Y) < t$, and randomly in case of equality.  Sweeping $t$ from $0$ to $\infty$ gives the lower-left boundary of $\calR(\probP^0_{XY}, \probP^1_{XY})$.
	Hence, the likelihood ratio is a sufficient statistic, i.e., no decision rule computed using $(X,Y)$ can have both type-I and type-II error probabilities lower than the optimal decision rule computed using $\probP^0(X,Y) / \probP^1(X,Y)$.
	
	More generally, let $U = u(X,Y)$ and $V = v(X,Y)$ be two statistics computed from $(X,Y)$ using possibly randomized functions $u$ and $v$ respectively (in all randomized functions we discuss, the randomness cannot depend directly on the hypothesis).  We say that $U$ is \emph{better} than $V$ for testing $(X,Y) \sim \probP^0$ versus $\probP^1$, denoted by $U \better V$, if the optimal decision rule using $U$ has uniformly better type-I and type-II error probabilities than that using $V$.  Clearly, this happens if and only if
	$\calR(\probP^0_{U}, \probP^1_{U}) \supset \calR(\probP^0_{V}, \probP^1_{V})$. 
	We trivially have $(X,Y) \better u(X,Y)$ for any $u$.
	As the likelihood ratio is a sufficient statistic, we also have $\probP^0(X,Y) / \probP^1(X,Y) \better (X,Y)$.
	The following classical result due to Blackwell~\cite{blackwell_order},~\cite[Ex.~III.1]{Polyanskiy_Wu_2025} characterizes exactly when this partial order relation is satisfied.

	\begin{fact}[Blackwell order] Let $T$ have distribution $\probP^i$ under hypothesis $\hyp = i$, and let $U = u(T)$ and $V = v(T)$ be two statistics computed from $T$ using possibly random functions $u$ and $v$.  Then, we have $\calR(\probP^0_U, \probP^1_U) \supset \calR(\probP^0_V, \probP^1_V)$ if and only if there exists a conditional distribution $\probP_{V | U}$ such that for all $v$,
		\begin{equation*}
			\probP^i_V(v) = \sum_u \probP_{V|U}(v \mid u) \probP^i_U(u)
		\end{equation*}
		for $i = 0,1$, i.e., inclusion of $\calR$ is equivalent to channel degradation.
		\label{fact: blackwell}
	\end{fact}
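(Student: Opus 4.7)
The plan is to handle the two directions separately. The ``if'' direction is immediate: given a channel $\probP_{V|U}$ reproducing $\probP^i_V$ from $\probP^i_U$ under both hypotheses, any (possibly randomized) decision rule $\phi$ on $V$ can be simulated from $U$ by first drawing $\tilde V \sim \probP_{V|U}(\cdot \mid U)$ using randomness independent of $\hyp$ and then applying $\phi(\tilde V)$. The resulting test on $U$ has exactly the same type-I and type-II error probabilities as $\phi$ on $V$, so every point of $\calR(\probP^0_V, \probP^1_V)$ lies in $\calR(\probP^0_U, \probP^1_U)$.

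For the ``only if'' direction, I would pass to the geometry of the error regions. Allowing randomized decisions makes $\calR(\probP^0_U, \probP^1_U)$ a closed convex subset of $[0,1]^2$, whose support function in a direction $(1,\lambda)$ with $\lambda \geq 0$ equals the Bayes risk $B_\lambda(U) = \sum_u \min\{\probP^0_U(u), \lambda \probP^1_U(u)\}$ by the Neyman--Pearson characterization recalled above. The hypothesized inclusion $\calR(\probP^0_U,\probP^1_U) \supset \calR(\probP^0_V,\probP^1_V)$ is therefore equivalent to the scalar comparisons $B_\lambda(U) \leq B_\lambda(V)$ holding for every $\lambda \geq 0$, which in the finite-alphabet setting of interest is a finite family of linear inequalities.

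The main work then lies in turning this family of scalar comparisons into an actual channel. I would do this via linear programming duality. Set up the LP whose variables are the kernel entries $\probP_{V|U}(v \mid u)$, with equality constraints $\sum_u \probP_{V|U}(v \mid u)\, \probP^i_U(u) = \probP^i_V(v)$ for $i=0,1$ and every $v$, together with nonnegativity and the row-sum constraint $\sum_v \probP_{V|U}(v\mid u) = 1$ for every $u$. By Farkas' lemma (or Hahn--Banach more generally), this LP is feasible if and only if no separating hyperplane certifies infeasibility, and a direct calculation identifies each such would-be separator with a violation of $B_\lambda(U) \leq B_\lambda(V)$ at some $\lambda \geq 0$. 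This last step---translating the support-function comparison into existence of the kernel, while tracking the two signed measures $\probP^0_U - \lambda \probP^1_U$ and the boundary behaviour as $\lambda$ varies---is the crux of Blackwell's theorem and the step I expect to be the main obstacle; the remaining pieces are routine convexity and LP bookkeeping.
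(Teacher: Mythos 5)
This statement is quoted in the paper as a classical \emph{Fact}, attributed to Blackwell and to \cite{Polyanskiy_Wu_2025}; the paper gives no proof of it, and in fact only ever invokes the ``if'' direction (a degradation channel implies inclusion of the error regions), which is exactly the direction you prove completely and correctly by simulating $V$ from $U$ with hypothesis-independent randomness. So for the purposes of this paper your argument already covers everything that is actually used.

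For the ``only if'' direction, however, your write-up is an outline with the decisive step missing, and you say so yourself. The reduction of region inclusion to the family of Bayes-risk inequalities $B_\lambda(U) \le B_\lambda(V)$, $\lambda \in [0,\infty]$, is fine (the regions are closed, convex, and determined by their lower-left boundaries, whose support functional in direction $(1,\lambda)$ is $\sum_u \min\{\probP^0_U(u), \lambda\probP^1_U(u)\}$). The gap is in the claim that ``a direct calculation identifies each such would-be separator with a violation of $B_\lambda(U)\le B_\lambda(V)$.'' If you write down the Farkas alternative for your LP and eliminate the multipliers $g_u$ of the row-sum constraints, the infeasibility certificate you obtain is of the form $\sum_u \min_v \bigl(f^0_v \probP^0_U(u) + f^1_v \probP^1_U(u)\bigr) > \sum_v \bigl(f^0_v \probP^0_V(v) + f^1_v \probP^1_V(v)\bigr)$, i.e.\ a comparison of Bayes risks for a decision problem with one action per element of the alphabet of $V$, not a two-action problem. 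Showing that the two-parameter family $B_\lambda$ already controls all of these multi-action risks is the real content of Blackwell's theorem for dichotomies; it requires an additional argument, e.g.\ the integral representation of concave piecewise-linear functions on the likelihood-ratio line by the extreme ``kinks'' $t \mapsto \min\{1,\lambda t\}$ (equivalently, passing to the standard measures and invoking a dilation/Strassen-type result). As it stands, your proposal names the obstacle but does not surmount it, so the hard direction is not yet a proof. Since the paper never uses that direction, this does not affect any result in the paper.
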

	
	Note that the same conditional distribution $\probP_{V|U}$ must generate $\probP^i_V$ from $\probP^i_U$ for both $i = 0$ and 1. Hence, to show that a statistic $U$ is better than $V$, it suffices to construct a channel from $U$ to $V$ that maps $\probP^i_U$ to $\probP^i_V$ for both $i = 0,1$.

	A distributed version of this problem is as follows: the decision maker $\party{C}$ no longer observes $X$ and $Y$ directly. Instead, agents $\party{A}$ and $\party{B}$ respectively observe $X$ and $Y$, and then compute and send $g(X)$ and $h(Y)$ to $\party{C}$, where $g$ and $h$ are possibly random functions, and together referred to as a code for $(X,Y)$.
	The set of all possible type-I and type-II error probabilities of this distributed hypothesis test is given by the region $\calR\big(\probP^0_{g(X), h(Y)}, \probP^1_{g(X), h(Y)}\big)$. For a given code $(g,h)$, the optimal decision rules are given by the likelihood ratio test as earlier, but using the induced distributions $\probP^i_{g(X), h(Y)}$. 
	The code $(g,h)$ has uniformly better type-I and type-II error probabilities than $(g', h')$ if and only if $(g(X), h(Y)) \better (g'(X), h'(Y))$.

	\subsection{$n$ i.i.d. samples of correlated binary data}
	As described in Section~\ref{sec: intro}, we now suppose that the agents observe $n$ i.i.d.\ samples drawn from a doubly symmetric binary source whose correlation is determined by the hypothesis. 
	We write $(X,Y) \sim \DSBS(p_i)$ to mean $X$ and $Y$ are $\Bern(\frac12)$ and $\Pr\{X \neq Y\} = p_i$. The correlation between $X$ and $Y$ that are $\DSBS(p)$ is given by 
	\begin{equation*}
		\rho_i = \frac{\bbE[XY] - \bbE[X]\bbE[Y]}{\sqrt{\var(X)\var(Y)}} = 1-2p_i,
	\end{equation*}
	and hence, we have positive correlation if $p_i < \frac12$, negative correlation if $p_i > \frac12$ and independence otherwise. Also note that $\rho_1 = - \rho_0$ if and only if $p_1 = 1 - p_0$.
	Hence, under hypothesis $\hyp = i$, we have that $(X_\ell, Y_\ell)$, $\ell = 1,\dots,n$ is independently drawn according to $\DSBS(p_i)$, i.e., $\probP^i_{X^nY^n} = \DSBS(p_i)^{\otimes n}$.
	In the co-located version where $\party{C}$ observes both $X^n$ and $Y^n$, the modulo 2 sum $X^n \oplus Y^n \coloneqq (X_\ell \oplus Y_\ell)_{\ell=1}^n $ is a sufficient statistic of $(X^n, Y^n)$ (in fact, the Hamming weight of $X^n \oplus Y^n$ is sufficient).
	
	In the distributed setup, $\party{C}$ observes only the coded observations $g(X^n)$ and $h(Y^n)$, for some functions $g : \{0,1\}^n \to \{0,1\}^{k_X}$ and $h : \{0,1\}^n \to \{0,1\}^{k_Y}$, with $k_X, k_Y \leq n$. Let $|g| \coloneqq |\{g(x^n) : x^n \in \{0,1\}^n\}|$ denote the number of possible outputs of the function $g$, then we define the \emph{rate} of $g$ as the number of output bits per input bit, i.e., $R_X = \frac{1}{n}\log_2 |g|$, which is at most $k_X/n$.  Similarly, we define $R_Y = \frac{1}{n}\log_2 |h|$, which is at most $k_Y/n$.

	In particular, we study \emph{linear codes}, where $g$ and $h$ be linear functions, i.e., $G$ and $H$ are matrices of dimensions $k_X \times n$ and $k_Y \times n$ respectively, and $g(X^n) = G X^n$ and $h(Y^n) = H Y^n$, with matrix multiplication modulo 2, and $X^n$ and $Y^n$ taken as column vectors.  Without loss of generality, we may assume that the matrices are full-row-rank, as we can always convert a rank-deficient $k \times n$ matrix $G$ with $k \leq n$ to a $\rank(G) \times n$ full-rank matrix by a one-to-one transformation.  Hence, the rates of $g$ and $h$ are respectively given by $k_X/n$ and $k_Y/n$.

	Our motivation for studying linear codes is the following: Since $Z^n \eqqcolon X^n \oplus Y^n$ is a sufficient statistic in the co-located problem, one possible approach is to choose $g$ and $h$ so that $\party{C}$ is able to compute an estimate $\hat Z^n$ that is ``close'' to $Z^n$. K\"orner and Marton~\cite{korner_marton} showed that for computing the modulo 2 sum of correlated binary vectors in such a distributed fashion, linear codes can be used to obtain the same performance as classical random quantization but at a lower sum-rate.   This suggests that the K\"orner--Marton coding scheme might also be well-suited to this DHT problem, as is indeed shown by Haim and Kochman~\cite{haim_kochman_itw} when $p_0$ and $p_1$ are both smaller than $\frac12$.

	Before proceeding to our technical results, it will be convenient to represent $(X,Y) \sim \DSBS(p_i)$, $i=0,1$ using the following equivalent notation. Assume, without loss of generality (by re-labelling the hypotheses, if necessary) that $p_1$ is closer to $1/2$ than $p_0$ is (i.e., $|p_0 - 1/2| \geq |p_1 - 1/2|$).
	Let $X \sim \Bern(\frac12)$, $Z \sim \Bern(p_0)$ and $\tilde Z \sim \Bern(\tilde p)$ be independent, with $\tilde p$ such that $p_1 = p_0 * \tilde p \coloneqq p_0(1-\tilde p) + \tilde p (1-p_0)$---such a $\tilde p$ is guaranteed to exist as $|p_0-\frac12| \geq |p_1 - \frac12|$. Then, we clearly have $Y = X \oplus Z \oplus i \tilde Z$.

	In particular, for $p_1 = 1 - p_0$, we have $\tilde p = 1$, and hence $Y = X \oplus Z \oplus i$. Likewise, for $p_0 \neq p_1 = \frac12$, we also have $\tilde p = \frac12$.
	For the case of $n$ i.i.d.\ samples, we have 
	\begin{equation}
		Y^n = X^n \oplus Z^n \oplus i \tilde Z^n, \label{eqn: notation}
	\end{equation}
	where $X_\ell$ are i.i.d.\ $\Bern(\frac12)$, $Z_\ell$ are i.i.d.\ $\Bern(p_0)$ and $\tilde Z_\ell$ are i.i.d.\ $\Bern(\tilde p)$, and these are also independent of each other, with $\tilde p$ such that $p_1 = p_0 * \tilde p$.  Note that $p_0$ and $p_1$ are on opposite sides of $1/2$ if and only if $\tilde p \geq 1/2$.
	
	
	\section{Best choices of linear codes} \label{sec: linear}
	Recall the partial order $\better$ on statistics defined in Section~\ref{sec: prelims}.  
	We now use the characterization in Fact~\ref{fact: blackwell} to obtain order relations involving statistics computed using linear codes.  First, we show that it is always better to use the same linear code to compress $X^n$ and $Y^n$, and that it is sufficient to base the decision rule on the modulo 2 sum of the compressed messages.  
	That is, the optimal decision rule based on the statistic $G(X^n \oplus Y^n)$ has uniformly better type-I and type-II error probabilities than the optimal rule based on the statistic $(GX^n, HY^n)$.  Similarly, the optimal rule based on $H(X^n \oplus Y^n)$ also has uniformly better type-I and type-II error probabilities than the optimal rule based on the statistic $(GX^n, HY^n)$.
	
	\begin{theorem}[Same linear code is (always) better]
		For any $p_0, p_1 \in [0,1]$, consider the hypothesis test $\DSBS(p_0)$ versus $\DSBS(p_1)$ using $n$ i.i.d.\ samples. For any $n$-column matrices $G,H$, we have $G(X^n \oplus Y^n) \better (GX^n, HY^n)$ and $H(X^n \oplus Y^n) \better (GX^n, HY^n)$.   \label{thm: same}
	\end{theorem}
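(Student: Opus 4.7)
My plan is to invoke Fact~\ref{fact: blackwell}: to prove $G(X^n \oplus Y^n) \better (GX^n, HY^n)$, I will exhibit a hypothesis-independent conditional distribution $W$ from $U \coloneqq G(X^n \oplus Y^n)$ to $V \coloneqq (GX^n, HY^n)$ with $\sum_u W(\cdot \mid u)\, \probP^i_U(u) = \probP^i_V$ for both $i = 0, 1$. I will use the parametrization $Y^n = X^n \oplus Z^n$ with $X^n \sim \Bern(1/2)^{\otimes n}$ independent of $Z^n \sim \Bern(p_i)^{\otimes n}$ under hypothesis $i$, so that $U = GZ^n$ depends on $Z^n$ only, while $V = (GX^n,\, HX^n \oplus HZ^n)$ mixes both.

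The channel I propose is the natural one: given $U = u$, fix (deterministically) any $z_0(u) \in \{0,1\}^n$ with $Gz_0(u) = u$, independently sample $\tilde X^n$ uniform on $\{0,1\}^n$, and output $\bigl(G\tilde X^n,\, H(\tilde X^n \oplus z_0(u))\bigr)$. To verify correctness I would check that the channel's output marginal matches $\probP^i_V$, or equivalently, that the true conditional $\probP^i_{V \mid U}$ is itself hypothesis-independent and agrees with the channel's output distribution. The crux is the following coset computation: conditional on $GX^n = a$ and $GZ^n = u$, the variable $HY^n = HX^n \oplus HZ^n$ is uniform on the coset $c_a \oplus Hz_0(u) \oplus H(\ker G)$ of $H(\ker G)$ in $\{0,1\}^{k_Y}$, where $c_a$ is any representative of $H\{x : Gx = a\}$. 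This is because $HX^n$ given $GX^n = a$ is uniform on the coset $c_a \oplus H(\ker G)$ (as $X^n$ is uniform on $\{x : Gx = a\}$), while $HZ^n$ is constrained to the coset $Hz_0(u) \oplus H(\ker G)$; XORing a uniform-on-coset variable with an independent element of the underlying subgroup leaves it uniform on the (shifted) coset, regardless of the distribution of the added element, so the hypothesis-dependence of $HZ^n$ within its coset is washed out. The channel's output has exactly this distribution. The second claim $H(X^n \oplus Y^n) \better (GX^n, HY^n)$ follows by the same argument with the roles of $X$ and $Y$ (equivalently, of $G$ and $H$) swapped, using the symmetry of $\DSBS(p_i)$ under $(X,Y) \mapsto (Y,X)$.

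I expect the main subtlety to be the coset bookkeeping: one must verify that the choice of representative $z_0(u)$ is immaterial (different choices differ by elements of $\ker G$, whose image under $H$ is absorbed by the uniform coset distribution of $HX^n$) and that the hypothesis-dependent distribution of $HZ^n$ within its coset truly drops out after XORing with $HX^n$. Once this linear-algebraic picture is set up cleanly, the verification reduces to a routine calculation.
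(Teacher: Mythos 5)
Your proof is correct, and it follows the same overall strategy as the paper --- exhibit an explicit degradation channel and invoke Fact~\ref{fact: blackwell} --- but the channel you build is different, and your verification is correspondingly heavier. The paper's channel draws a fresh uniform $Y'^n \sim \Bern(\tfrac12)^{\otimes n}$ and outputs $(U \oplus GY'^n,\, HY'^n)$: since under a DSBS the vector $Y^n$ is uniform and independent of $X^n \oplus Y^n$, and $GX^n = U \oplus GY^n$ exactly, the output distribution matches that of $(GX^n, HY^n)$ with essentially no computation --- no coset representative $z_0(u)$ is ever chosen, and the hypothesis-dependent distribution of $Z^n$ within the coset $\{z : Gz = u\}$ never enters. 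Your channel instead resamples $X^n$ and must therefore reconstruct $HZ^n$ from $u$ alone, which is what forces the coset argument; that argument is sound: $HX^n$ conditioned on $GX^n = a$ is indeed uniform on a coset of $H(\ker G)$, XORing with the independent $H(\ker G)$-valued residual of $HZ^n$ preserves that uniform coset distribution whatever its law, so the dependence on $p_i$ washes out, and the choice of $z_0(u)$ is immaterial for the same reason. The paper's decomposition of $GX^n$ through $Y^n$ buys brevity; your version makes explicit the linear-algebraic mechanism (uniformity on cosets of $H(\ker G)$) that explains why only the uniformity of the marginals, and not the value of $p_i$, is being used.
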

	\begin{proof}
		We prove the statement by constructing a channel to simulate $(GX^n, HY^n)$ from $G(X^n \oplus Y^n) = G(Z^n \oplus i \tilde Z^n)$ under $\hyp = i$.  Let $Y'_\ell$ be i.i.d.\ $\Bern(\frac12)$, independent of $X^n,Y^n$.  Set $U = G(X^n \oplus Y^n)$, $V_1 = U \oplus GY'^n$ and $V_2 = H Y'^n$, then $(V_1,V_2)$ has the same distribution as $(GX^n, HY^n)$ and we are done by Fact~\ref{fact: blackwell}.  The other side holds similarly.
	\end{proof}

	It is worth noting that this is not true if non-linear codes are allowed. For example, if we take $n = 2$, $g(x^2) = x_1x_2$ and $h(x^2) = \max\{x_1,x_2\}$ (these are the logical AND and OR operations on the two bits), then $(g(X^2), h(Y^2))$ and $(g(X^2), g(Y^2))$ cannot be compared by $\better$.

	We say that a code $(g,h)$ is the \emph{best} in some class $\calC$ of codes if $(g(X^n), h(Y^n) \better (g'(X^n), h'(Y^n))$ for all $(g',h')$ in $\calC$.  It is possible that a given class $\calC$ may not have such a greatest element.  We show that when $\calC$ is the class of all linear codes of a given rate, there is indeed a greatest element for some choices of $(p_0, p_1)$. 
	
	First, note that by Theorem~\ref{thm: same}, when looking for the best linear codes, we may assume without loss of generality that $G = H$, i.e., both agents use the same full-row-rank $k \times n$ matrix $G$ to encode $X^n$ and $Y^n$ respectively to $GX^n$ and $GY^n$.  We now show that for certain values of $(p_0, p_1)$, the best linear code at rate $k/n$ is the most straightforward code possible---simply truncating the observation $X^n$ to $X^k$ by picking the matrix $G = [\eye_k \,  \zeros_{k \times (n-k)}]$ (more generally, any one-to-one transformation of any $k$ of $X_1,\dots,X_n$ will do). 
	This is in analogy with the dictator functions maximizing the mutual information of binary vectors \cite{kumar_courtade,dictator_mi},  
	though there does not appear to be a technical connection (indeed the ``multivariate'' generalization is known to be false~\cite{dictator_ctrex}). The values $(p_0, p_1)$ for which we are able to show this result are: (1) $p_0 = 1/2$ (testing for independence, $\rho_0 = 0$), (2) $p_1 = 1/2$ (testing against independence, $\rho_1 = 0$), and (3) $p_1 = 1 - p_0$ (testing positive against negative values of the same magnitude of correlation, $\rho_1 = - \rho_0$).
	
	\begin{theorem}[Truncation is (sometimes) the best linear code]
		For any $p \in [0,1]$, consider the following hypothesis tests:
		\begin{itemize}
			\item[(1)] $\DSBS(p)$ versus $\DSBS(\frac12)$,
			\item[(2)] $\DSBS(\frac12)$ versus $\DSBS(p)$, and
			\item[(3)] $\DSBS(p)$ versus $\DSBS(1-p)$
		\end{itemize}
		using $n$ i.i.d.\ samples. 
		For each test above, for any $n$-column matrices $G,H$, we have $(X^k, Y^k) \better (GX^n, HY^n)$ where $k = \min\{\rank(G), \rank(H)\}$.  \label{thm: trunc}
	\end{theorem}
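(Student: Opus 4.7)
The strategy is to reduce the claim to a one-sided statement about $W^n \coloneqq X^n\oplus Y^n$, and then exhibit explicit simulation channels tailored to each test. By Theorem~\ref{thm: same}, assuming WLOG that $\rank(G)\le\rank(H)=k$, it suffices to show $(X^k,Y^k)\better G(X^n\oplus Y^n)$ for any full-row-rank $k\times n$ matrix $G$. Moreover, $(X^k,Y^k)$ is Blackwell-equivalent to $X^k\oplus Y^k$ (the first $k$ bits of $W^n$, denoted $W^k$): XOR in one direction, and in the other, given $W^k$, drawing an independent uniform $X^k$ and setting $Y^k = X^k\oplus W^k$ reproduces the joint distribution under either hypothesis. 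So it suffices to prove $W^k\better GW^n$. Permuting columns of $G$ (harmless since $W^n$ is i.i.d.\ under both hypotheses) and left-multiplying by an invertible $k\times k$ matrix (a Blackwell-preserving bijection of the output), I may assume $G = [\eye_k\mid \tilde G]$, so that $GW^n = W^k\oplus \tilde G W'$, where $W' = (W_{k+1},\dots,W_n)$ is a product Bernoulli string independent of $W^k$.

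I thus seek a kernel $q(v\mid u)$ satisfying $\sum_u q(v\mid u)\,P^i_u(u) = P^i_v(v)$ for both $i\in\{0,1\}$, with $P^i_u = \Bern(p_i)^{\otimes k}$ and $P^i_v = G\,\Bern(p_i)^{\otimes n}$. For tests~(1) and (2), where one of $p_0,p_1$ equals $\tfrac12$, the additive-noise channel $q(v\mid u) = \Pr(u\oplus \tilde G Y' = v)$ with $Y'\sim \Bern(p)^{\otimes(n-k)}$ (and $p$ the non-$\tfrac12$ value) works directly: under the $p$-hypothesis it matches $GW^n$ by construction, and under the $\tfrac12$-hypothesis the input $u$ is uniform on $\{0,1\}^k$, so any independent translate is again uniform---which is precisely the distribution of $GW^n$ since $G$ has rank $k$.

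For test~(3) ($p_1 = 1-p_0$), the same additive channel breaks whenever $\tilde G\,\ones_{n-k}\neq \zeros$, since $\tilde G\,\Bern(p_i)^{\otimes(n-k)}$ then depends nontrivially on $i$. My plan here is a more elaborate channel $q(v\mid u) = \mu(v\oplus Au)$ for a carefully chosen $k\times k$ matrix $A$ satisfying $A\ones_k = G\ones_n$. This constraint makes $q$ equivariant under the combined bit-flip $(u,v)\mapsto (u\oplus \ones_k,\,v\oplus G\ones_n)$ that swaps the two hypotheses, so matching the $\hyp=0$ marginal automatically matches the $\hyp=1$ one. Fourier analysis of the $\hyp=0$ condition forces $\hat\mu(T) = (1-2p_0)^{|G^{\top}T|-|A^{\top}T|}$, and the identity $|G^{\top}T|-|A^{\top}T|\equiv (G\ones_n\oplus A\ones_k)\cdot T\pmod 2$ combined with $A\ones_k = G\ones_n$ forces the exponent to be always even, making the formula consistent under the sign flip $1-2p_1 = -(1-2p_0)$. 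The hard part---the heart of the proof---is choosing $A$ so that this $\hat\mu$ actually corresponds to a probability distribution, i.e.\ that its inverse Fourier transform is non-negative. A natural candidate is $A = \eye_k + (\tilde G\,\ones_{n-k})\,e_1^{\top}$, obtained by folding the last $n-k$ columns of $G$ into its first; with this choice Fourier inversion gives $\hat\mu(T) = \lambda^{\lfloor s(T)/2\rfloor}$ where $\lambda = (1-2p_0)^2$ and $s(T) = T_1 + |\tilde G^{\top}T|$, and non-negativity of $\mu$ must be verified directly (case-by-case in $T_1$ and the parity of $|\tilde G^{\top}T|$). In the degenerate case $G\ones_n = \zeros$, one instead takes $A = 0$ and $\mu = P^0_v = P^1_v$, trivially yielding a valid constant channel.
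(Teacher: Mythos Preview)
Your reduction and your treatment of tests~(1) and~(2) are correct and match the paper: the same additive-noise channel $V=U\oplus\tilde G Z'$ with $Z'\sim\Bern(p)^{\otimes(n-k)}$ works for exactly the reasons you give.

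The gap is in test~(3). You correctly set up the Fourier condition $\hat\mu(T)=(1-2p_0)^{|G^\top T|-|A^\top T|}$ and the parity constraint $A\ones_k=G\ones_n$, and you correctly identify that nonnegativity of $\mu$ is ``the heart of the proof''---but you do not actually verify it, and for your choice $A=\eye_k+(\tilde G\ones_{n-k})e_1^\top$ it \emph{fails}. Take $k=3$, $n=4$, $\tilde G=(1,1,0)^\top$; then $G\ones_4=(0,0,1)^\top\neq\zeros$, so you are not in the degenerate case. One computes $s(T)=T_1+(T_1\oplus T_2)$, so $\hat\mu(T)=\lambda$ exactly when $(T_1,T_2)=(1,0)$ and $\hat\mu(T)=1$ otherwise. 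Inverse Fourier transforming gives
\[
\mu(0,1,0)=-\tfrac{1-\lambda}{4}<0\qquad\text{for every }p_0\in(0,1)\setminus\{\tfrac12\},
\]
so $\mu$ is not a probability distribution. Thus the channel you propose does not exist in general, and the case-by-case verification you defer would not go through.

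The paper's argument for~(3) avoids this issue entirely by a different decomposition. It separates the rows of $\tilde G$ by parity of their weight, writing (after row permutation) $G=[B\ C]$ with $B=\bigl[\begin{smallmatrix}\eye_{k_e}\\ \zeros\end{smallmatrix}\bigr]$ and $C=\bigl[\begin{smallmatrix}\zeros & A_e\\ \eye_{k_o} & A_o\end{smallmatrix}\bigr]$, where every row of $C$ has even weight and hence $C\ones=\zeros$. Then the plain additive channel $V=BU\oplus CZ'$, $Z'\sim\Bern(p_0)^{\otimes(n-k_e)}$, simulates $GW^n$ from $W^{k_e}$ under both hypotheses (since $C\ones=\zeros$ kills the $i$-dependence), and one concludes via the trivial $W^k\better W^{k_e}$. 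In your Fourier language this corresponds to first passing through a smaller truncation rather than insisting on a $k\times k$ matrix $A$; no nonnegativity miracle is required.
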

	\begin{proof} 
		In all cases, it suffices, by Fact~\ref{fact: blackwell}, to construct a channel to simulate $G(X^n \oplus Y^n) = G(Z^n \oplus i \tilde Z^n)$ from $X^k \oplus Y^k = Z^k \oplus i \tilde Z^k$ for any $k$-rank $G$, as Theorem~\ref{thm: same} already gives us $G(X^n \oplus Y^n) \better (GX^n, HY^n)$.
		We may assume without loss of generality that $G$ has full row-rank and $G = [\eye_{k} \ A]$ for some $k \times (n-k)$ matrix $A$ (if not, replace $k$ by $\rank(G)$ and use row-reductions to obtain $\eye_k$).  
		Note that (1) and (2) are equivalent, and to use \eqref{eqn: notation}, we need $p_1$ to be closer to $1/2$.  Hence both (1) and (2) are proved by considering $p_1 = 1/2$ in \emph{a} below, and (3) is proved in \emph{b}. 
		
		\noindent \paragraph{$p_1 = \frac12$ } Note that $G(X^n \oplus Y^n) = G(Z^n \oplus i \tilde Z^n) = Z^k \oplus A Z_{k+1}^n \oplus i G \tilde Z^n$, with $\tilde Z_\ell$ being i.i.d.\ $\Bern(\frac12)$.    Since the modulo 2 sum of $\Bern(\frac12)$ random variables is still $\Bern(\frac12)$, we have that the distribution of $G \tilde Z^n$ is the same as that of $\tilde Z^k$.  Let $Z'_\ell$ be i.i.d.\ $\Bern(p)$, independent of everything else, then letting $U = Z^k \oplus i \tilde Z^k$ and $V = U \oplus A{Z'}_{k+1}^n$, we have that $V$ has the same distribution as $G(Z^n \oplus i \tilde Z^n)$, which completes the proof.
		
		The case with $p_0 = \frac12$ follows similarly.
		
		\noindent \paragraph{$p_1 = 1-p_0$. } Classify the rows of $A$ as having even versus odd weight, and assume, again without loss of generality, that $A = \begin{bmatrix} A_e \\ A_o \end{bmatrix}$, where $A_e$ has even-weight rows and $A_o$ has odd-weight rows.  Let $k_e$ be the number of rows of $A_e$, and $k_o$ the number of rows of $A_o$.  Then $G$ is of the form
		$G = [B\ C]$
		with
		$B = \begin{bmatrix} \eye_{k_e} \\ \zeros \end{bmatrix}$
		and
		$C = \begin{bmatrix} \zeros & A_e \\ \eye_{k_o} & A_o \end{bmatrix}$.
		Observe that all rows of $C$ are of even weight.
		We now show that $X^{k_e} \oplus Y^{k_e} \better G(X^n \oplus Y^n)$, then as $X^k \oplus Y^k \better X^{k_e} \oplus Y^{k_e}$ trivially, we are done.
		
		As $p_1 = 1 - p_0$, we have $\tilde p = 1$, i.e., $\tilde Z_\ell = 1$ for all $i$. With $G = [B\ C]$, we have
		$G(Z^n \oplus i \ones^n) = B (Z^{k_e} \oplus i \ones^{k_e}) \oplus W$,
		where $W = C (Z_{k_e+1}^n \oplus i \ones_{k_e+1}^n)$. Since all rows of $C$ are of even weight, $C \ones_{k_e+1}^n = \zeros_{k_e+1}^n$, and hence, $W = C Z_{k_e+1}^n$.
		Let $Z'_\ell$ be i.i.d.\ $\Bern(p)$, independent of everything else, and let $U = Z^{k_e} \oplus i Z^{k_e}$ and $V = \begin{bmatrix} U  \\ \zeros \end{bmatrix} + C {Z'}_{k_e+1}^n$.  Then $V$ has the same distribution as $G(X^n \oplus Y^n)$, and this completes the proof.
	\end{proof}
	
	Thus, we see that truncation is the best linear code for some values of $(p_0, p_1)$, given by the thick lines in Fig.~\ref{fig: p0p1}.  We conjecture that this holds more generally, whenever $(p_0, p_1)$ are on opposite sides of $1/2$, i.e., $\rho_0 \rho_1 \leq 0$, marked by the shaded region in Fig.~\ref{fig: p0p1}.
	
	\begin{conjecture}
		For any $p_0 \neq p_1$,  such that $(p_0-1/2)(p_1-1/2) \leq 0$, consider the hypothesis test $\DSBS(p_0)$ versus $\DSBS(p_1)$ using $n$ i.i.d.\ samples. For any $n$-column matrices $G,H$, we have $(X^k, Y^k) \better (GX^n, GY^n)$ where $k = \min\{\rank(G), \rank(H)\}$.  \label{conj: trunc}
	\end{conjecture}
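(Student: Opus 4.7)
The plan is to follow the proof template of Theorem~\ref{thm: trunc}: by Theorem~\ref{thm: same} and the trivial relation $(X^k, Y^k) \better X^k \oplus Y^k$, the conjecture reduces to constructing, for every $k \times n$ full-row-rank matrix $G = [\eye_k \mid A]$ (after row reduction), a channel from $T^k \coloneqq X^k \oplus Y^k$ to a vector $V \in \{0,1\}^k$ that reproduces the law of $GT \coloneqq G(X^n \oplus Y^n) = T^k \oplus A T_{k+1}^n$ under \emph{both} $\hyp = 0$ and $\hyp = 1$, where $T_\ell = Z_\ell \oplus i\tilde Z_\ell$ is i.i.d.\ $\Bern(p_i)$ under $\hyp = i$. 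The three cases in Theorem~\ref{thm: trunc} correspond to the extreme values $\tilde p \in \{\tfrac12, 1\}$: for $\tilde p = \tfrac12$ the noise $A T_{k+1}^n$ may be simulated by XOR-ing with an i.i.d.\ $\Bern(p_0)$ vector; for $\tilde p = 1$ the noise is deterministic and an even/odd-weight-row split of $A$ resolves the case. For $\tilde p \in (\tfrac12, 1)$ neither trick works: the required noise law depends genuinely on $i$, and a quick calculation on the Walsh basis of $\{0,1\}^k$ shows that no translation-invariant channel---i.e., of the form $\pi(v' \mid v) = \lambda(v \oplus v')$ for a fixed $\lambda$---can match both marginals, so the channel must depend on $v$ nontrivially.

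My approach would be to induct on $n - k$, adjoining one column of $A$ at a time. At step $j$ one extends the simulation for a matrix $G_{j-1}$ of size $k \times (k+j-1)$ to $G_j = [G_{j-1} \mid a]$ by appending a new column $a \in \{0,1\}^k$; writing $U$ for the previously constructed simulation of $G_{j-1} T^{k+j-1}$, the new target is $U \oplus a N$ with $N \sim \Bern(p_i)$ fresh. A natural channel for this step is ``output $U$ with probability $1 - f(U)$ and $U \oplus a$ with probability $f(U)$'', for some $f \colon \{0,1\}^k \to [0,1]$. Matching the marginal with that of $U \oplus aN$ under both hypotheses reduces, for each coset $\{v, v \oplus a\}$, to the $2 \times 2$ linear system $p_i - f(v \oplus a) = R_i(v) \bigl(p_i - f(v)\bigr)$ for $i = 0, 1$, where $R_i(v) = \Pr[U = v \mid \hyp = i]/\Pr[U = v \oplus a \mid \hyp = i]$; solving gives $f(v) = \bigl[p_1(1 - R_1(v)) - p_0(1 - R_0(v))\bigr]/(R_0(v) - R_1(v))$.

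The main obstacle is proving $f(v) \in [0,1]$ at \emph{every} step of the induction, and this is where the opposite-sign hypothesis must enter decisively. At the base step $j = 1$ with $U = T^k$ i.i.d., direct computation confirms feasibility: for $|a| = 1$ the solution specializes to the binary channel with transition probabilities $2 p_0 p_1$ and $2(1 - p_0)(1 - p_1)$, both lying in $[0,1]$ precisely when $(p_0 - \tfrac12)(p_1 - \tfrac12) \leq 0$; for $|a| = 2$ positivity reduces to $2(p_0 + p_1) - 3 p_0 p_1 \geq 1$, which again holds throughout the opposite-sign region (writing $p_0 = \tfrac12 - q$, $p_1 = \tfrac12 + r$ with $q, r \in [0, \tfrac12]$ makes this evident). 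Propagating positivity inductively is the hard part: after several atomic steps the distribution of $U$ is a complicated linear image of i.i.d.\ Bernoullis---concretely, its Walsh transform is $\hat P_i(s) = (1 - 2p_i)^{|G_{j-1}^\top s|}$---and uniformly controlling the ratios $R_i(v)$ in $v$ and $j$ would likely require an invariant preserved by every atomic channel (e.g.\ a uniform bound on a per-coordinate log-likelihood, or an opposite-sign posterior condition on $U$). Failing this, an alternative would be to bypass the induction and prove the conjecture directly via the convex-order reformulation of the Blackwell order, comparing the distributions of the likelihood ratios $L^{T^k}(T^k)$ and $L^{GT}(GT)$ under $\hyp = 0$ using the above Walsh expansion.
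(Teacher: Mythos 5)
This statement is Conjecture~\ref{conj: trunc}, which the paper explicitly leaves \emph{open} --- it states it is ``unable to prove'' it and offers only the numerical evidence of Section~\ref{sec: emp_evidence} --- so there is no proof in the paper to compare against; the only question is whether your proposal closes the gap, and it does not. Your reduction is correct and matches the paper's framework: by Theorem~\ref{thm: same} and Fact~\ref{fact: blackwell} it suffices to exhibit one channel carrying the law of $X^k \oplus Y^k$ to that of $G(X^n \oplus Y^n)$ simultaneously under both hypotheses, and your Walsh-basis observation explaining why the translation-invariant (i.e.\ additive-noise) channels used in Theorem~\ref{thm: trunc} can only work at the extremes $\tilde p \in \{\tfrac12, 1\}$ is accurate. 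Your base-case computations are also a genuine consistency check: the feasibility conditions $2p_0p_1 \le 1$, $2(1-p_0)(1-p_1) \le 1$, and $2(p_0+p_1) - 3p_0p_1 \ge 1$ reproduce exactly the hyperbolic boundaries that the paper reports (without derivation) from its numerical linear-feasibility experiments for $(k,n) = (1,2)$ and $(2,3)$.

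The gap is the one you name yourself: the inductive step. You restrict attention to atomic channels supported on the coset $\{v, v\oplus a\}$, solve the resulting $2\times 2$ system per coset, and then need $f(v) \in [0,1]$ for every coset of every intermediate law $P^U_i$ arising along the induction; no invariant guaranteeing this is supplied, and the uniform control of the ratios $R_i(v)$ across all $G_{j-1}$ is precisely the content of the open problem, not a technicality. Two further points make the program fragile. First, the atomic ansatz is a strict restriction on channels, so even establishing that it fails at some step would not refute the conjecture, while establishing that it succeeds requires handling every column of $A$ for at least one ordering and one choice of pivot columns, plus the degenerate cosets with $R_0(v) = R_1(v)$. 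Second, the paper's Fig.~\ref{fig: finite_n} shows the finite-$n$ feasibility region strictly contains the opposite-sign region and shrinks to it only as $n \to \infty$; any per-step invariant you propagate must therefore hold exactly on the closed opposite-sign region uniformly in $j$, with no slack to give away. In short: a sensible, well-motivated attack that is consistent with everything the paper computes and slightly sharpens its small-case numerics into closed form, but it is a proof sketch with an acknowledged missing core, matching the conjectural status the paper itself assigns to the statement.
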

	
	\begin{figure}[t]
		\centering
		        \begin{tikzpicture}[scale=2.8]
          \draw[->] (0,0) -- (1.1,0) node[right] {$p_0$};
          \draw[->] (0,0) -- (0,1.1) node[above] {$p_1$};
          
          \draw[black, thin] (0,0) rectangle (1,1);
          
          \fill[pattern color = red, pattern=north west lines] (0,0.5) -- (0.5,0.5) -- (0.5,1) -- (0,1) -- cycle;
          \fill[pattern color = red, pattern=north west lines] (0.5,0) -- (1,0) -- (1,0.5) -- (0.5,0.5) -- cycle;
          
          \draw[very thick] (0.5,0) -- (0.5,1);
          \draw[very thick] (0,0.5) -- (1,0.5);
          \draw[very thick] (0,1) -- (1,0);
          
          \foreach \x in {0,0.5,1}
            \draw (\x,0) -- (\x,-0.015) node[below] {\scriptsize $\x$};
          \foreach \y in {0,0.5,1}
            \draw (0,\y) -- (-0.015,\y) node[left] {\scriptsize $\y$};
        \end{tikzpicture}
		\caption{$(p_0, p_1)$-plane where truncation is the best linear scheme for testing $\DSBS(p_0)$ versus $\DSBS(p_1)$: thick, black lines where it is known (Theorem~\ref{thm: trunc}) and red, shaded region where it is conjectured (Conjecture~\ref{conj: trunc}).}
		\label{fig: p0p1}
	\end{figure}
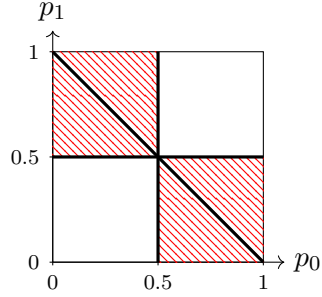
	
	Note that $p_0$ and $p_1$ are on opposite sides of $\frac12$ if and only if $\tilde p$ is at least $\frac12$.  Theorem~\ref{thm: trunc} shows that truncation is the best linear scheme for the two extreme cases of $\tilde p = \frac12$ and $1$, but the channels that we construct to show the degradation is different for both cases.  Unifying these proofs might lead to a proof for Conjecture~\ref{conj: trunc}. We have been unsuccessful at finding an explicit channel in the general case, which makes us think that the proof might require more complicated machinery as developed for boolean functions~\cite{ODonnell_2014}.
	Though we are unable to prove Conjecture~\ref{conj: trunc} here, we provide some empirical support for it in Section~\ref{sec: emp_evidence}.

	\section{Comparison with random quantization}
	
	Given that truncation is the best linear code for a set of $(p_0,p_1)$ pairs, it is now easy to check that a linear code is suboptimal for such $(p_0, p_1)$---simply find a random code that does better than truncation and we are done.
	It is usually difficult to evaluate the exact type-I and type-II error probabilities of random codes, so we focus on the asymptotic Stein regime, as is common in the DHT literature. That is, we take $n \to \infty$, and consider the best achievable type-II error \emph{exponent} $\lim_{n \to \infty} -\frac{1}{n}\log e_2$ while requiring that the type-I error exponent is smaller than some constant $\epsilon \in (0,1)$.  We denote this best achievable type-II exponent when both agents operate at rate $R$ by $E(R)$.  In the co-located case, the Stein exponent is $E(\infty) = D(\probP^0_{XY} \,\|\, \probP^1_{XY})$.
	In the binary setting that we consider with $\probP^i_{XY} = \DSBS(p_i)$, we have $E(\infty) = E(1) = \bindiv(p_0 \,\|\, p_1)$, i.e., $e_2 = \exp(- n \bindiv(p_0 \,\|\, p_1) + \o(n))$, where $\bindiv(a \,\|\, b)$ denotes the KL divergence between $\Bern(a)$ and $\Bern(b)$ distributions.
	
	Hence, in the distributed case using truncation to encode $(X^n, Y^n)$ to $(X^k, Y^k)$, the best achievable type-II error is $e_2 = \exp(- k \bindiv(p_0 \,\|\, p_1) + \o(k))$. The Stein exponent is then simply
	\begin{equation*}
		\truncE(R) = \lim_{n \to \infty} -\tfrac{1}{n} (- k \bindiv(p_0 \,\|\, p_1) + \o(k)) = R \bindiv(p_0 \,\|\, p_1),
	\end{equation*}
	or equivalently, the exponent-rate trade-off is given by a straight line joining $(0,0)$ to $(1, \bindiv(p_0 \,\|\, p_1))$ as shown in Fig.~\ref{fig: trunc_vs_random}.

	We now consider Han's random coding scheme~\cite{han_multiterminal}.  It requires the agents to pick ``test channels'' $\probP_{U|X}$ and $\probP_{V|Y}$, which allows them to communicate with the decision maker at rates $I(U;X)$ and $I(V;Y)$ respectively (see~\cite{han_amari_survey,kochman_wang_improved} for a detailed description of the scheme).  
	Given these test channels, let  $\probP^i_{UXYV}$ denote the induced joint distribution given by $\probP^i_{XY}\probP_{U|X}\probP_{V|Y}$
	The Stein exponent achieved by Han's random coding scheme is given by
	\begin{equation}
		\HanE(\probP_{U|X}, \probP_{V|Y}) = \min_{\pi_{UXYV}} D( \pi_{UXYV} \,\|\, \probP^1_{UXYV}) \label{eqn: han_exp}
	\end{equation}
	with the minimum taken over distributions $\pi_{UXYV}$ having the same $UX$, $UY$, and $UV$ marginals as $\probP^0_{UXYV}$.   
	By the theory of I-projections~\cite{csiszar_shields}, the optimal $\pi^*$ in the above minimization has the following structure:
	\begin{equation*}
		\pi^*(u,x,y,v) = \probP^1_{UXYV}(u,x,y,v) f_1(u,x) f_2(v,y) f_3(u,v)
	\end{equation*}
	for some functions $f_1, f_2, f_3$.  This gives us the Markov relations $X \markov (U,V) \markov Y$ and $U \markov (X,Y) \markov V$ under $\pi^*$.  For testing against independence, i.e., when $\probP^1_{XY} = \probP^0_{X}\probP^0_{Y}$, this simplifies to $X \markov U \markov V \markov Y$.  
	
	For our DSBS setup, since we consider symmetric rate constraints, we choose $\probP_{U|X} = \probP_{V|Y}$ to both be binary symmetric channels (BSC) of the same parameter, say $\alpha \in (0,\frac12)$.  To operate at rate $R$, we require $I(U;X) = 1 - \binent(\alpha) = R$, where $\binent(a)$ denotes the entropy of $\Bern(a)$.  This is not known to be the optimal choice of test channel, but experiments suggest that this is indeed the case for any $(p_0, p_1)$ on opposite sides of $1/2$. 
	In Fig.~\ref{fig: trunc_vs_random}, we plot $\HanE$ for a representative $(p_0, p_1)$ by numerically solving \eqref{eqn: han_exp} for different choices of $\probP_{U|X}$ and $\probP_{V|Y}$ that are randomly picked from a Dirichlet distribution.  Our choice $(p_0, p_1) = (0.1, 0.9)$ satisfies $p_1 = 1 - p_0$, since it is only in this case (and testing for/against independence) where we know that truncation is better than other linear codes, letting us meaningfully compare linear and random codes via truncation, but the plots are qualitatively identical for other $p_0, p_1$ on opposite sides of $1/2$ as well. 
	
	Let $\HanE(R)$ denote the Han exponent on taking $\probP_{U|X}$ and $\probP_{V|Y}$ to be the same BSC at rate $R$. 
	We see that $R \mapsto \HanE(R)$ is not concave and there is a value of $R$ such that $\HanE(R) > \truncE(R)$.  This allows us to improve on both $\HanE$ and $\truncE$ in a trivial manner by appropriately truncating the quantized vectors $(U^n, V^n)$, which yields an improved random coding exponent $\comE$ given by the upper concave envelope of $\HanE$ (Fig.~\ref{fig: trunc_vs_random}). First generate a random code with $\alpha = \alpha^*$ so that we are at the point on the $\HanE(R)$ curve where the tangent from the origin touches it, and then truncate to obtain all points on the line to the origin. 
	
	\begin{figure}[t]
		\centering
		\includegraphics[width=0.68\linewidth]{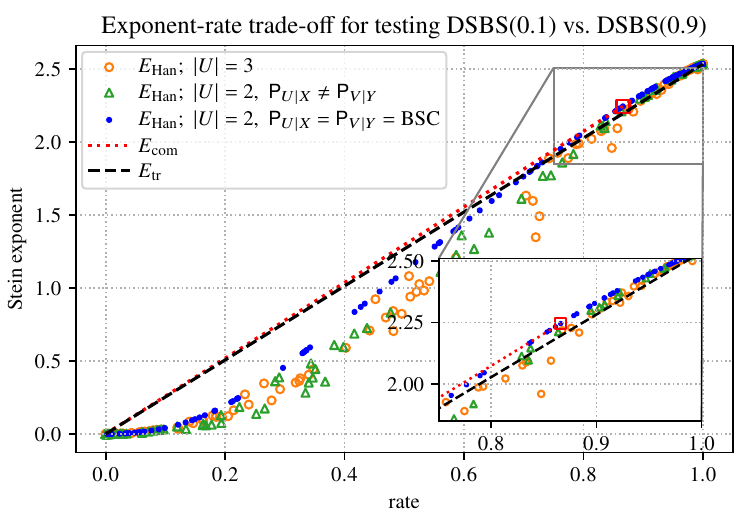} 
		\caption{Comparison of $\truncE$ and $\HanE$ for $(p_0, p_1) = (0.1, 0.9)$.  Choosing  $\probP_{U|X}$ and $\probP_{V|Y}$ to be the same BSC seems to be the best choice of test channels in Han's scheme.  At sufficiently large rates, $\HanE(R) > \truncE(R)$, and hence we have an improved exponent $\comE(R)$ which is strictly better than $\truncE(R)$ at all $R$ and strictly better than $\HanE(R)$ for small $R$.}
		\label{fig: trunc_vs_random}
	\end{figure}

	For testing against independence in particular, i.e., $p_1 = 1/2$, we can arrive at this conclusion analytically.
	Since $\pi^*$ has the same $UX, UV, VY$ marginals $\probP^0$, we can identify the transition probabilities at each step of the $X \markov U \markov V \markov Y$ Markov chain as $\pi^*_{X|U} = \BSC(\alpha)$, $\pi^*_{V|U} = \BSC(\alpha * \alpha * p)$, and $\pi^*_{Y|V} = \BSC(\alpha)$.  We can then explicitly compute Han's exponent as $\HanE = I(U;V) = 1 - \binent(\alpha * \alpha * p)$ at rate $R = I(U;X) = 1 - \binent(\alpha)$.  This gives us the exponent-rate trade-off curve parametrized by $\alpha \in (0,1/2)$.  Note that $\HanE(1) = \truncE(1) = \bindiv(p_0 \,\|\, \frac12)$.  It is also easy to see that the derivative $\HanE'(1) = 0$ by checking that
	\begin{equation*}
		\lim_{\alpha \to \frac12} \frac{ \frac{\d }{\d \alpha}1 - \binent(\alpha * \alpha * p)  }{  \frac{\d }{\d \alpha}1 - \binent(\alpha) } = 0,
	\end{equation*}
	and hence, at a sufficiently large rate, Han's quantization scheme achieves a strictly higher Stein exponent than truncation.

	It might be of interest to note that for testing against independence with a rate-constraint only on $X^n$ (i.e., $R_X \leq R$, $R_Y \leq \infty$), this quantization is known to be optimal~\cite{ahlswede_csiszar}.  The exponent-rate pair is $(I(U;Y), I(U;X)) = (1 - \binent(\alpha * p), 1- \binent(\alpha))$; this curve is concave and always lies above the truncation exponent given by the straight line (also called the $F_I$-curve~\cite{witsenhausen_wyner,calmon_sdpi} or the information bottleneck function).  In our setting with a symmetric rate constraint, the exponent-rate pair is given by a different ``bottleneck'' $(I(U;V), I(U;X)) = (1 - \binent(\alpha * \alpha * p), 1- \binent(\alpha))$, which turns out to not be concave.

	\section{Numerical support for Conjecture~\ref{conj: trunc}} \label{sec: emp_evidence}
	
	\begin{figure}[t]
		\centering
		\includegraphics[width=0.61\linewidth]{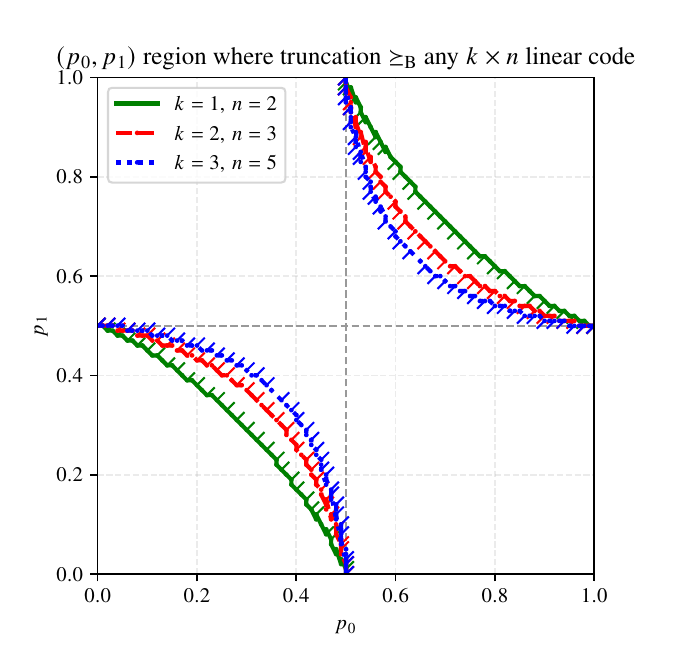} \vspace*{-5pt}
		\caption{The hatched region contained between the two curves shows the $(p_0,p_1)$ region where $X^k \oplus Y^k \better G(X^n \oplus Y^n)$ for all rank-$k$, $n$-column matrices $G$, shown for three values of $(k,n) = (1,2),\,(2,3),\,(3,5)$. The region appears to be shrinking to $(p_0 - 1/2)(p_1 - 1/2) \leq 0$ as $n \to \infty$.} \vspace*{-8pt}
		\label{fig: finite_n}
	\end{figure}
	
	In Section~\ref{sec: linear}, we conjectured that truncation is the best linear scheme for testing $\DSBS(p_0)$ versus $\DSBS(p_1)$ for any $p_0$ and $p_1$ on opposite sides of $1/2$.  By Fact~\ref{fact: blackwell}, to show that truncation is better than a full-row-rank $k \times n$ matrix $G$,
	it suffices to show the existence of a $2^k \times 2^k$ row-stochastic matrix $\Wmat$ such that $\bvec_i = \Wmat \avec_i$ for $i = 0,1$, where $\bvec_i$ has entries given by $\probP^i_{G(X^n \oplus Y^n)}(s^k)$ and $\avec_i$ has entries given by $\probP^i_{X^k \oplus Y^k}(s^k)$, with $s^k$ going over all vectors in $\{0,1\}^k$.  This is a linear feasibility problem, which can be solved numerically for small $(k,n)$. In Fig.~\ref{fig: finite_n}, for each choice of $(k,n)$, we go over all $(p_0,p_1)$ pairs (discretized to a $0.01 \times 0.01$ grid) and check whether or not a $\Wmat$ as above exists for each $G$.  For $(k,n) = (1,2),\,(2,3)$, we can explicitly compute these boundaries to be the rectangular hyperbolae $2p_0p_1=1$ and $p_0+p_1=3p_0p_1$ (and their reflections about $p_0 + p_1 = 1$).
	The region appears to be shrinking as $n$ increases, which leaves us with another conjecture: this region shrinks to exactly the set of $(p_0, p_1)$ on opposite sides of $1/2$ as marked in Fig.~\ref{fig: p0p1}  as $n \to \infty$.

	\printbibliography
	
\end{document}